\renewcommand{\thesubfigure}{\thefigure.\arabic{subfigure}}
\renewcommand{\p@subfigure}{}
\renewcommand{\@thesubfigure}{\thesubfigure:\hskip\subfiglabelskip}
\def\square{\pst@object{square}}
\def\square@i(#1,#2)#3{{\use@par\solid@star\psframe[origin={#1,#2}](#3,#3)}}
\DeclareFontFamily{U}{tipa}{}
\DeclareFontShape{U}{tipa}{bx}{n}{<->tipabx10}{}
\newcommand{\arc@char}{{\usefont{U}{tipa}{bx}{n}\symbol{62}}}%
\newcommand{\arc}[1]{\mathpalette\arc@arc{#1}}
\newcommand{\arc@arc}[2]{%
  \sbox0{$\m@th#1#2$}%
  \vbox{
    \hbox{\resizebox{\wd0}{\height}{\arc@char}}
    \nointerlineskip
    \box0
  }%
}
\newcommand{\doublewedge}{\big@doubleop{\wedge}}
\newcommand{\big@doubleop}[1]{%
  \DOTSB\mathop{\mathpalette\big@doubleop@aux{#1}}\slimits@
}
\newcommand\big@doubleop@aux[2]{%
  \sbox\z@{$\m@th#1#2$}%
  \makebox[1.35\wd\z@][s]{$\m@th#1#2\hss#2$}%
}
\newcommand{\abs}[1]{\left|#1\right|}     
\theoremstyle{plain}
\newtheorem{theorem}{Theorem}
\newtheorem{lemma}{Lemma}
\newtheorem{remark}{Remark}
\newtheorem{definition}{Definition}
\newtheorem{example}{Example}
\newtheorem{axiom}{Axiom}
\newtheorem{conjecture}{Conjecture}
\begin{document}

\title{Energy Dissipation in Hilbert Envelopes on Motion Waveforms Detected in Vibrating Systems:\\
An Axiomatic Approach}

\author[J.F. Peters]{J.F. Peters}
\address{
Department of Electrical \& Computer Engineering,
University of Manitoba, WPG, MB, R3T 5V6, Canada and
Department of Mathematics, Faculty of Arts and Sciences, Ad\.{i}yaman University, 02040 Ad\.{i}yaman, T\"{u}kiye,
}
\email{james.peters3@umanitoba.ca}

\author[T.U. Liyanage]{T.U. Liyanage}
\address{
Department of Electrical \& Computer Engineering,
University of Manitoba, WPG, MB, R3T 5V6, Canada
}
\email{uswattat@myumanitoba.ca}

\subjclass[2020]{74H80 (Energy minimalization),76F20 (Dynamical systems), 93A05 (Axiomatic systems theory)}

\date{}

\begin{abstract}
This paper introduces an axiomatic approach in the theory of energy dissipation in Hilbert envelopes on waveforms emanating from various vibrating systems. A Hilbert envelope is a curve tangent to peak points on a motion waveform. The basic approach is to compare non-modulated vs. modulated waveforms in measuring energy loss during the vibratory motion $m(t)$ at time $t$ of moving object such as a walker, runner or biker recorded in a video. Modulation of $m(t)$ is achieved by using Mersenne primes to adjust the frequency $\omega$ in the Fourier transform $m(t)e^{\pm j2\pi \omega t}$.   Expediture of energy $E_{m(t)}$ by a system is measured in terms of the area bounded by the motion $m(t)$ waveform at time $t$.
\end{abstract}
%
\keywords{Dissipation, Energy, Frequency, Hilbert envelope, Mersenne prime, Motion waveform, Vibrating System, Video Frames}
\maketitle
\tableofcontents

\section{Introduction}
Dynamical system vibrations appear as varying oscillations in motion waveforms~\cite{DeLeoYork2024,Feldman2011vibration}. 
The focus in this paper is on the detection of energy dissipation that commonly occurs in vibrating dynamical systems.  For a motion waveform $m(t)$ at time $t$, measure of motion dissipated energy $E_{diss}$ is defined in terms of the difference between non-modulated energy $E_{nmod}$ and modulated energy $E_{mod}(t)$, i.e.,
\begin{align*}
E_{diss}(t) &= \abs{E_{nmod}(t)-E_{mod}(t)}\\
& = \abs{\mbox{non-modulated}\ E_{nmod}(t) - \mbox{modulated}\ E_{mod}(t)}. 
\end{align*} 
at time $t$ of a vibratory system.  In this work, two forms of motion waveform energy are considered, namely, non-modulated (smoothing) of $m(t)$ and modulated $m(t)$ that results from the product of $m(t)$ and the exponential $e^{\pm j2\pi \omega t}$ introduced by Euler~\cite{Euler1748}.  A formidable source of waveform energy measurement result from the Fourier transform $m(t)e^{\pm j2\pi \omega t}$~\cite{Fourier1822}. 

A non-moduled form of waveform energy $E_m(t)$ is associated with the planar area bounded by motion curve beginning at instant $t_0$ and ending instant $t_1$, namely, 
\boxed{\boldsymbol{
E_{m(t)} = \int_{t_0}^{t_1}\abs{m(t)}^2dt
}}. In other words, system energy is identified with system waveform area, instead of the more usual energy graph~\cite{Kok2017dissipation}.  Modulated system energy is measured using \boxed{\boldsymbol{
E_{m(t)} = \int_{t_0}^{t_1}\abs{m(t)e^{\pm j2\pi \omega t}}^2dt}}.

An application of the proposed approach in measuring energy dissipation is given in terms of the Hilbert envelope on the peak points on waveforms derived from the up-and-down movements of the up-and-down movements of a walker, runner or biker recorded in a sequence of video frames.  An important finding in this paper is the effective use of Mersenne primes to adjust the frequency $\omega$ of the Euler exponential to achieve waveform modulation with minimal energy dissipation (in the uniform waveform case (see Conjecture~\ref{conjecture}.\ref{case1}).  We prove that waveform energy is a characteristic, which maps to the complex plane (See Theorem~\ref{theorem:Em}. This result extends the waveform energy results in~\cite{Tiwari2024},~\cite{Peters2020}).

\begin{table}[h!]\label{table:symbols}
\begin{center}
\caption{Principal Symbols Used in this Paper}
\begin{tabular}{|c|c|}
\hline
Symbol & Meaning\\ 
\hline\hline
$2^A$ & Collection of subsets of a nonempty set $A$\\
\hline
$A_i\in 2^A$ & Subset $A_i$ that is a member of $2^A$\\
\hline
$\mathbb{C}$ & Complex plane\\
\hline
$t$ & Clock tick\\
\hline
$e^{j\omega t}$ & $cos(\omega t) + jsin(\omega t)$~\cite{Euler1748}\\
\hline
$M$ & Mersenne prime\\
\hline
$\omega$ & Waveform Oscillation Frequency\\
\hline
$E_{m(t)}$ & Energy of motion waveform $m(t)$\\
\hline
$\varphi_t:2^A\to \mathbb{C}$ & $\varphi$ maps $2^A$ to complex plane $\mathbb{C}$ at time $t$\\
\hline
$\varphi_t(A_i\in 2^A)\in\mathbb{C}$ & Characteristic of $\varphi(A_i\in 2^A)\in\mathbb{C}$ at time $t$.\\
\hline
\end{tabular}
\end{center}
\end{table}

\section{Preliminaries}
Highly oscillatory, non-periodic waveforms provide a portrait of vibrating system behavior. Energy dissipation (decay) is a common characteristic of every vibrating dynamical system. Included in this paper is an axiomatic basis is given for measuring this characteristic of dynamical systems.  A characteristic is a mapping $\varphi_t:A_i\to \mathbb{C}$, which maps a subsystem $A_i$ in a system $A$ to a point in the complex plane $\mathbb{C}$.

\begin{figure}[!ht]
	\centering
	\includegraphics[width=65mm]{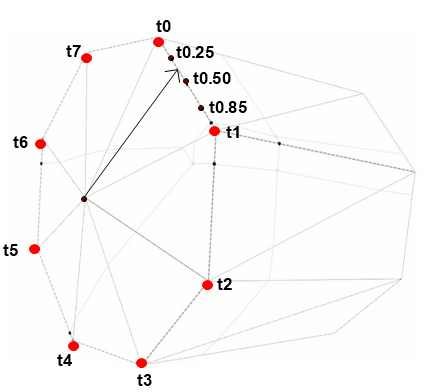}
	\caption{Morse instants clock}
	\label{fig:MorseClock}
\end{figure}  

\begin{definition}\label{def:system0}{\bf(System)}.\\
A {\bf system} $A$ is a collection of interconnected components (subsystems $A_i\in 2^A$) with input-output relationships.
\qquad \textcolor{blue}{$\blacksquare$}
\end{definition}

\begin{definition}\label{def:dynamicalSystem0}{\bf(Dynamical System)}.\\
A {\bf dynamical system} is a time-constrained physical system.
\qquad \textcolor{blue}{$\blacksquare$}
\end{definition}

\begin{definition}\label{def:dSwaveform}{\bf(Dynamical System Output Waveform)}.\\
The output of a {\bf dynamical system} is a time-constrained sequence of discrete values.
\qquad \textcolor{blue}{$\blacksquare$}
\end{definition}

It has been observed that the theory of dynamical systems is a major mathematical discipline closely aligned with many areas of mathematics ~\cite{Katok1995dynSys}.  Energy dissipation is considered in many contexts such as heating, liquid (viscosity) and water-wave scattering. In this work, the focus is on energy decay represented by the difference between the energy of  non-smooth (non-modulated) and smooth (modulated) motion waveforms.  A motion waveform is a graphical portrait of the radiation emitted by moving system (e.g., walker, runner, biker) with oscillatory output.

\begin{axiom}\label{axiom:clock}{\bf (Instants Clock)}.\\
Every system has its own instants clock, which is a cyclic mechanism that is a simple closed curve with an instant hand with one end of the instant hand at the centroid of the cycle and the other end tangent to a curve point indicating an elapsed time in the motion of a vibrating system.  A clock tick occurs at every instant that a system changes its state. 
\qquad \textcolor{blue}{$\blacksquare$}
\end{axiom}
\vspace*{0.2cm}

\begin{remark}{\bf (What Euler tells us about time)}.\\
On an instants clock, every reading $t\in \mathbb(C)$, a point \boxed{t = a+jb, a,b\in\mathbb{R}} in the complex plane.
For example, \boxed{t_{0.25}=0.25 + j0 = 0.25} in Fig.~\ref{fig:MorseClock}. The Morse instants clock is also called a homographic clock~\cite{HofmannKasner1928}, since the tip of an instant clock $t$-hand moves on the circumference a circle, where $t$ is a complex number~\cite{Kasner1928}. For $t$ at the tip of a vector with radius $r=1$, angle $\theta$ and \boxed{a = cos\theta,b=sin\theta} in the complex plane, then
\begin{center}
\boxed{\boldsymbol{
t = a+jb = cos\theta + jsin\theta = e^{j\theta}.
}}
\end{center}
\vspace*{0.2cm}
An instant of time viewed as an exponential is inspired by Euler~\cite{Euler1748}. \qquad \textcolor{blue}{$\blacksquare$}
\end{remark}
\vspace*{0.2cm}

\begin{example}
A sample Morse instants clock is shown in Fig.~\ref{fig:MorseClock}. The clock hand points to the elapsed time in the interval \boxed{\boldsymbol{(t_{0.25}\leq t\leq t_{0.25})}} in milliseconds (ms) after a system has begun vibrating. The clock face is a polyhedral surface in a Morse-Smale in a convex polyhedron in 3D Euclidean space~\cite{Ludmany2023}, since Morse-Smale polyhedron is an example of a mechanical shape descriptor ideally suited as clock model because of its underlying piecewise smooth geometry. This form of an instants clock has been chosen to emphasize that the elapsed time $t_k$ is a real number in an instants interval \boxed{\left[t_0,t_k\right]\in\mathbb{C}^2} in which $t_k$ is indeterminate. From a planar perspective, the proximity of sets of instants clock times is related to results given for computational proximity in the digital plane~\cite{Peters2019}.  In this example, the instant hand is pointing to an elapsed time between $t_{0.25}$ ms and $t_{0.50}$ ms.
\qquad \textcolor{blue}{$\blacksquare$}
\end{example}
\vspace*{0.2cm}

\begin{definition}\label{def:characteristic}{\bf (Clocked Characteristic of a subsystem)}.\\
The clocked characteristic of a subsystem $A_i$ of a system $A$ at time $\varphi_t(A_i)$ is a mapping \boxed{\varphi_t:A_i\in 2^A\to \mathbb{C}} defined by
$\varphi_t(A_i)=a+bj\in\mathbb{C}, a,b\in\mathbb{R},\\ j = \sqrt(-1),\varphi_t(A_i)\in\mathbb{C}$.
\qquad \textcolor{blue}{$\blacksquare$}
\end{definition}
\vspace*{0.2cm}

\begin{mdframed}[backgroundcolor=green!15]
\label{axiom:+ve char}{\bf  Vibrating System Time-Constrained Motion Characteristic.}

\begin{axiom}\label{axiom:complexNo}{\bf (Subsystem Motion Characteristic)}.\\
Let $A_i\in 2^A$ (subsystem $A_i$ in the collection of subsystems $2^A$ in system $A$) that emits changing radiation due to system movements (motion) and let $t$ be a clock tick.  The motion characteristic of subsystem motion $A_i\in 2^A$ is a mapping $m_t:A_i\to \mathbb{C}$ defined by
\begin{center}
\boxed{\boldsymbol{ 
m_t(A_i)=a+bj\in \mathbb{C},a,b\in \mathbb{R}, j = \sqrt(-1),t\in\mathbb{R}.
}}
\vspace*{0.2cm}
\end{center} 
i.e., a subsystem $A_i$ motion characteristic of a system $A$ is a mapping $m_t(A_i\in 2^A)\in\mathbb{C}$ at time $t$. 
\qquad \textcolor{blue}{$\blacksquare$}
\end{axiom}
\vspace*{0.2cm}

\begin{remark}
For the motion characteristic, we write \boxed{m(t)} when it is understood that motion is on a subset $A_i\in 2^A$ in a dynamical system $A$.  Axiom~\ref{axiom:complexNo} is consistent with the view~\cite[p. 81]{Blair1976} of the characteristic vector field, represented here with a planer characteristic vector field $\boldsymbol{\xi}$ of a dynamical system with points $p(x,y,t)\in \xi$ that has positive complex characteristic coordinates at clock tick (time) $t$ such that\\
\begin{center}
\boxed{
\varphi_t(A_i\in 2^A) = 
p\in \xi = 
(a-jb)\frac{\partial \xi}{\partial x}
+ (a-jb)\frac{\partial \xi}{\partial y}
+ (a-jb)\frac{\partial \xi}{\partial t}, a,b\in\mathbb{R}.
}
\vspace*{0.2cm}
\end{center}
The 1-1 correspondence between every point $p$ having coordinates in the Euclidean plane and points in the complex plane is lucidly introduced by D. Hilbert and S. Cohn-Vossen~\cite[\S 38, 263-265]{Hilbert1952}.  For an introduction to characteristic groups, see~\cite{Tziolas2024},\cite{Bombieri1976},\cite{Lang1979}.  
\qquad \textcolor{blue}{$\blacksquare$}
\end{remark}
\end{mdframed}

\begin{figure}[!ht]
	\centering
	\includegraphics[width=85mm]{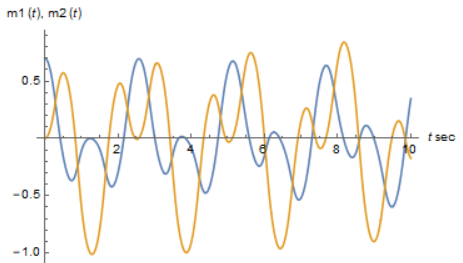}
	\caption{Sample spring system waveforms}
	\label{fig:springPortrait}
\end{figure} 

\begin{example}{\bf Spring system vibration}.\\
A pair of sample sinusoidal waveforms emitted by an expanding and contracting spring system is shown in Fig.~\ref{fig:springPortrait}.
\qquad \textcolor{blue}{$\blacksquare$}
\end{example}
\vspace*{0.2cm}

Vibrating system waveform $m(t)$ modulation (smoothing) is achieved by adjusting the frequency $\omega$ in an Euler exponential \boxed{e^{\pm j2\pi \omega t}}, which is used in oscillatory waveform curve smoothing. It has been found that 
Mersenne primes provide an effective means an effective means of adjusting the frequency $\omega$.   It has been observed by G.W. Hill~\cite{Hill1979} that Mersenne primes $M_p = 2^p - 1 = 3,7,31,...$ for Prime $p=2,3,5,...$ are useful in estimating variability as well as in estimating average values in sequences of discrete values.
\vspace*{0.2cm}

\begin{axiom}\label{axiom:waveformEnergy}{\bf (Waveform Energy)}.\\
A measure of dynamical system energy is the area of a finite planar region bounded by system waveform $m(t)$ curve at time $t$,
defined by
\begin{center}
\vspace*{0.2cm}
\boxed{\boldsymbol{
E_{m(t)} = \int_{t_0}^{t_1}\abs{m(t)}^2dt
}.}
\end{center}
\end{axiom}
\vspace*{0.2cm}

\begin{lemma}\label{lemma:energy}
Dynamical system energy is time-constrained and is always limited.
\end{lemma}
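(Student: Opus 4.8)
The plan is to unpack the claim into its two asserted properties — that the energy $E_{m(t)}$ is \emph{time-constrained} and that it is \emph{always limited} (bounded) — and to derive each from the axioms already in place. By Definition~\ref{def:dynamicalSystem0} a dynamical system is a time-constrained physical system, and by Definition~\ref{def:dSwaveform} its output waveform $m(t)$ is a time-constrained sequence of discrete values; by Axiom~\ref{axiom:clock} every system carries an instants clock whose hand sweeps a simple closed curve, so the elapsed time always lies in a bounded interval $[t_0,t_1]$. First I would make this precise: the domain of integration in the defining formula
\begin{center}
$E_{m(t)} = \int_{t_0}^{t_1}\abs{m(t)}^2\,dt$
\end{center}
of Axiom~\ref{axiom:waveformEnergy} is exactly this finite interval, which establishes the ``time-constrained'' half directly — the energy is computed only over the span of clock ticks during which the system vibrates.

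Next I would establish boundedness. The key observation is that $m(t)$ is the motion characteristic of Axiom~\ref{axiom:complexNo}, hence for each clock tick $t$ the value $m(t)=a+bj$ is a point in $\mathbb{C}$ with $a,b\in\mathbb{R}$, so $\abs{m(t)}^2=a^2+b^2$ is a finite nonnegative real number. Since the waveform is a time-constrained sequence of discrete values (Definition~\ref{def:dSwaveform}), there are only finitely many such ticks in $[t_0,t_1]$, and each contributes a finite amount; more geometrically, $E_{m(t)}$ is identified with the area of a finite planar region bounded by the waveform curve (Axiom~\ref{axiom:waveformEnergy}), and a finite planar region has finite area. Writing $C=\sup_{t\in[t_0,t_1]}\abs{m(t)}^2<\infty$ (finite because the oscillation amplitude of a physical vibrating system does not blow up on a finite interval), one obtains the bound
\begin{center}
$E_{m(t)} \le C\,(t_1-t_0) < \infty$,
\end{center}
which is the ``always limited'' half of the statement.

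The step I expect to be the main obstacle is justifying that $C=\sup_{t\in[t_0,t_1]}\abs{m(t)}^2$ is actually finite, i.e., that the motion characteristic cannot be unbounded on the finite clock interval. This is not a purely formal consequence of the axioms as literally stated — Axiom~\ref{axiom:complexNo} only says each individual value lies in $\mathbb{C}$, not that the family of values is uniformly bounded. I would close this gap by invoking the physical content of Definition~\ref{def:dynamicalSystem0}: a vibrating dynamical system emits radiation of bounded intensity, so its motion characteristic is a bounded (indeed discretely sampled, by Definition~\ref{def:dSwaveform}) map on $[t_0,t_1]$; equivalently, one may simply take finiteness of the amplitude as part of what it means for $m(t)$ to be a waveform of a physical system. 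With that in hand the two displayed inequalities combine to give $0\le E_{m(t)}\le C(t_1-t_0)<\infty$ over the finite interval $[t_0,t_1]$, proving that dynamical system energy is both time-constrained and always limited. A brief remark at the end would note that this is exactly why the area-based definition of energy is well posed, setting up the later characteristic map $\varphi_t:2^A\to\mathbb{C}$ of Theorem~\ref{theorem:Em}.
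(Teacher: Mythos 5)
Your proof follows essentially the same route as the paper's: the time-constrained half comes from the bounded integration interval $\left[t_0,t_1\right]$ in Axiom~\ref{axiom:waveformEnergy}, and the boundedness half from identifying $E_{m(t)}$ with the area of a finite planar region. You are in fact somewhat more careful than the paper, which simply asserts finiteness of the area without addressing the uniform boundedness of $\abs{m(t)}$ on $\left[t_0,t_1\right]$ --- the gap you explicitly flag and patch with the explicit bound $E_{m(t)}\le C\,(t_1-t_0)$.
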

\begin{proof}
Let $E_m$ be the energy of a dynamical system, defined in Axiom~\ref{axiom:waveformEnergy}. From Axiom~\ref{axiom:waveformEnergy}, system energy always occurs in a bounded temporal interval $\left[t_0,t_1\right]$.
Hence, $E_m$ is time constrained.
From Axiom~\ref{axiom:clock}, the length of a system waveform is finite, since, from Axiom~\ref{axiom:waveformEnergy}, system duration is finite. From Axiom~\ref{axiom:waveformEnergy}, system energy is derived from the area of a finite, bounded region.  Consequently, system energy is always finite.
\end{proof}
\vspace*{0.2cm}

\begin{theorem}\label{theorem:Em}
If $X$ is a dynamical system with waveform $m(t)$ at time $t$ and which changes with every clock tick, then observe
\begin{compactenum}[1$^o$]
\item System waveform characteristic values are in the complex plane.
\item\label{step:part1} System energy varies with every clock tick.
\item System radiation characteristics are finite.
\item All system characteristics map to the complex plane.
\item Waveform energy decay is a characteristic, which maps to $\mathbb{C}$.
\end{compactenum}
\end{theorem}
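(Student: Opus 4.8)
The plan is to verify the five assertions in order, treating the theorem as a consolidation of the axiomatic framework rather than as a computation; each item follows by chaining together the stated axioms with Lemma~\ref{lemma:energy}. For $1^o$, I would invoke Axiom~\ref{axiom:complexNo} directly: the motion characteristic is by definition the mapping $m_t:A_i\to\mathbb{C}$ with $m_t(A_i)=a+bj$, $a,b\in\mathbb{R}$, so every value taken by the waveform $m(t)$ on a subsystem $A_i\in 2^A$ lies in the complex plane. Here I would also record the remark following Axiom~\ref{axiom:complexNo}, which places the characteristic vector field $\boldsymbol{\xi}$ values in $\mathbb{C}$ via the Hilbert--Cohn-Vossen $1$--$1$ correspondence between the Euclidean and complex planes.

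For $2^o$, the hypothesis says $X$ changes with every clock tick; by Axiom~\ref{axiom:clock} a tick occurs exactly when the system changes state, so the waveform value $m(t)$ is distinct at consecutive ticks. Since $E_{m(t)}=\int_{t_0}^{t_1}\abs{m(t)}^2\,dt$ (Axiom~\ref{axiom:waveformEnergy}) depends on the restriction of $m$ to $[t_0,t_1]$, a change in $m$ at a tick alters the integrand and hence the value of $E_{m(t)}$; this is the step where one must phrase the dependence at the level of the tick-indexed discrete output of Definition~\ref{def:dSwaveform} rather than as a claim about an individual instant. For $3^o$, Lemma~\ref{lemma:energy} already gives that $E_m$ is finite and time-constrained, and Axiom~\ref{axiom:clock} gives that the waveform has finite length over the bounded interval $[t_0,t_1]$; the radiation characteristics, being the tick-indexed outputs that generate $m(t)$, are therefore bounded, so $3^o$ follows. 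For $4^o$, every characteristic in the sense of Definition~\ref{def:characteristic} is by construction a mapping $\varphi_t:A_i\in 2^A\to\mathbb{C}$, and the motion characteristic $m_t$ of Axiom~\ref{axiom:complexNo} is of exactly this form, so the assertion is immediate.

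The only item needing genuine argument is $5^o$. I would show that $t\mapsto E_{diss}(t)=\abs{E_{nmod}(t)-E_{mod}(t)}$ meets Definition~\ref{def:characteristic}: by $3^o$ and Lemma~\ref{lemma:energy} each of $E_{nmod}(t)$ and $E_{mod}(t)$ is a well-defined finite real number attached to the subsystem $A_i$ at tick $t$, their difference is single-valued, and a real number $r$ is the point $r+0j\in\mathbb{C}$; hence $E_{diss}$ is a single-valued mapping $A_i\in 2^A\to\mathbb{C}$ indexed by $t$, i.e. a clocked characteristic, and by $2^o$ it genuinely varies with ticks, so the word ``decay'' is not vacuous. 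The main obstacle is thus precision rather than difficulty: making the ``varies with every clock tick'' claim in $2^o$ and the well-definedness of the energy difference in $5^o$ rigorous against the discrete-output model of Definition~\ref{def:dSwaveform}, without conflating the clock-tick index with the continuous time variable inside the integral in Axiom~\ref{axiom:waveformEnergy}.
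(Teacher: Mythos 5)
Your proposal follows essentially the same route as the paper: each item is discharged by citing Axiom~\ref{axiom:complexNo}, Axiom~\ref{axiom:clock}, Axiom~\ref{axiom:waveformEnergy}, Definition~\ref{def:characteristic}, and Lemma~\ref{lemma:energy} in the same combinations. The only divergence is item $5^o$, where the paper simply defers to its step $4^o$, whereas you explicitly construct $E_{diss}(t)$ as a real-valued, hence complex-valued, clocked characteristic --- a more careful version of the same argument, not a different one.
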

\begin{proof}$\mbox{}$\\
\begin{compactenum}[1$^o$]
\item From Def.~\ref{def:characteristic}, a system characteristic is a mapping from a subsystem to the complex plane at time $t$,  From Axiom~\ref{axiom:complexNo}, every waveform motion characteristic $m(t)\in \mathbb{C}$ at time $t$, which is the desired result.
\item From Lemma~\ref{lemma:energy}, system energy is time-constrained and always occurs in a bounded temporal interval.  From Axiom~\ref{axiom:clock}, there is a new clock tick at every instant in time $t$ ms. From Axiom~\ref{axiom:waveformEnergy}, system energy varies with every clock tick.
\item From Axiom~\ref{axiom:clock}, all system radiation characteristics are finite, since system duration is finite.
\item\label{step:characteristic} From Axiom~\ref{axiom:complexNo}, every system $A$ characteristic is a mapping from a subsystem $A_i\in 2^A$ to the complex plane, which is the desired result.
\item From the proof of step~\ref{step:characteristic}, the desired result follows.
\end{compactenum}
\end{proof}
\vspace*{0.2cm}

\begin{figure}[!ht]
\centering
\includegraphics[width=80mm]{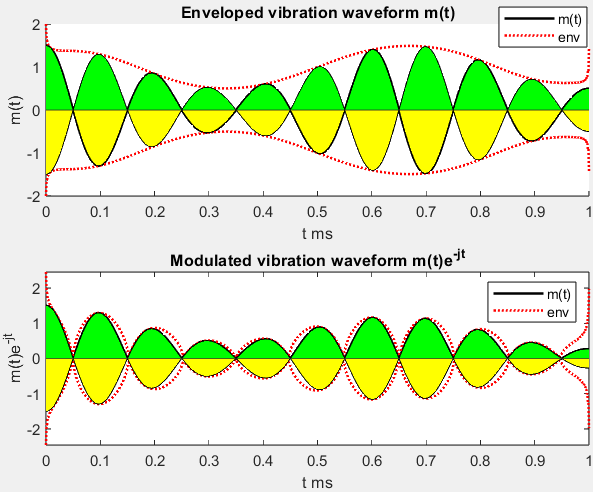}
\caption{Hilbert envelope on modulated vibration waveform.}
\label{fig:lobes}
\end{figure}

To obtain an approximation of system energy, a system waveform is represented by a continuous curve defined by a Hilbert envelope tangent to waveform peak points, forming what known as Hilbert lobes.
\vspace*{0.2cm}

\begin{mdframed}[backgroundcolor=green!15]{\bf Hilbert envelope lobes.}\label{def:HilbertLobes}\\
A {\bf Hilbert envelope} (denoted by \boxed{H_{env}}) is a curve that is tangent to the peak points on a waveform~\cite[\S 18.4, p. 132]{King2009Hilbert}.  A {\bf Hilbert envelope lobe} (denoted by \boxed{H_{env_{lobe}}}) is a tiny bounded planar region attached to single waveform peak point on a waveform envelope, defined by
\[
\boldsymbol{H_{env} = \sqrt{m(t)^2 + (-m(t))^2}}\mbox{~\cite{Brandt2011}}
\]
The energy represented by a lobe $\boldsymbol{H_{env_{lobe}}}$ area of a tiny planar region attached to an oscillatory motion waveform $m(t)$ is defined by 
\[
H_{env_{lobe}} = \int_{a}^{b}\abs{m(t)}^2dt
\]
It is lobe area that provides a measure of the energy repesented by a waveform segment.
\end{mdframed}
\vspace*{0.2cm}

The modulated vibration waveform $m(t)$ in Fig.~\ref{fig:lobes} varies with lower peak points than the original motion waveform, depending on the choice of Mersenne prime frequency.  To minimize energy loss due to modulation, a Mersenne prime is chosen for the frequency $\omega$ in an Euler exponential in \boxed{\boldsymbol{m(t)e^{\pm j\omega t}}} to obtain
\begin{compactenum}[{\bf result}.1$^o$]
\item Modulated system waveform $m(t)$ is smoother for a particular Mersenne prime frequency (i.e., the waveform oscillations are more uniform).
\item Modulated system energy loss is minimal,  for a particular Mersenne prime frequency.
\end{compactenum}

\section{Application: Modulating System Waveform with Minimal Energy Dissipation}
In this section, we illustrate how Mersenne primes can be used effectively to obtain the following results:
\begin{compactenum}[{M$\to \omega$-}1$^o$]
\item Usage of a M-prime as the frequency in the Euler exponential in\\
 
\begin{center}
\vspace*{0.2cm}
\boxed{\boldsymbol{m(t)e^{\pm j\omega Mt}}}
\end{center}
\vspace*{0.2cm}

reduces motion $m(t)$ waveform motion energy. 
\item Energy dissipation varies in modulated vs. non-modulated waveforms for different choices of frequency $M$ in \boxed{\boldsymbol{e^{\pm j\omega Mt}}}, depending on whether a waveform has uniformly or non-uniformly varying cycles around the origin.
\end{compactenum}

\begin{figure}[!ht]
\centering
\includegraphics[width=80mm]{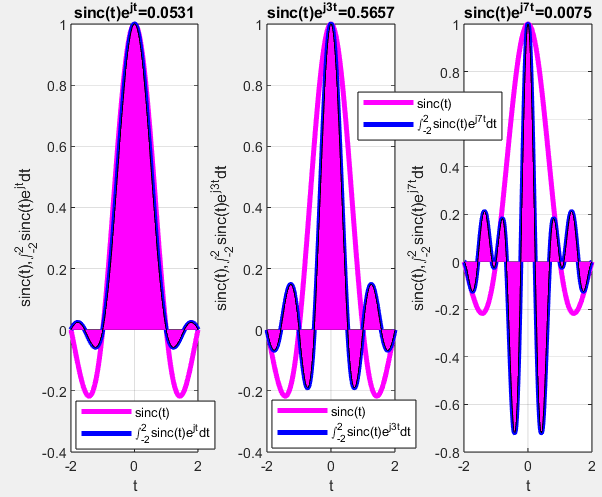}
\caption{3 forms of $m(t)e^{j\omega t}$}
\label{fig:3forms}
\end{figure}

\begin{conjecture}\label{conjecture}
The choice of a Mersenne prime \boxed{M\leq 31} will always result in lower motion waveform peak values using $M$ as the frequency in the Euler exponential to achieve waveform modulation and minimal energy dissipation.
\end{conjecture}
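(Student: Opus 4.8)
\noindent\emph{Towards a proof.}
The plan is to reduce the conjecture to an inequality between Hilbert envelopes and then to read off the energy statement from Axiom~\ref{axiom:waveformEnergy}. Writing the modulated trace as the real part $m(t)\cos(\omega M t)$ (the imaginary part being symmetric), the envelope formula in the box above gives the modulated envelope $\sqrt{2}\,|m(t)|\,|\cos(\omega M t)|$, while the non-modulated envelope is $\sqrt{2}\,|m(t)|$. Since $|\cos|\le 1$, peak attenuation is immediate pointwise; the real work is (i) promoting $\le$ to $<$ at the isolated peak ticks $t_k$ of the instants clock of Axiom~\ref{axiom:clock}, and (ii) controlling, over the whole interval $[t_0,t_1]$, the attenuation profile so that the loss $E_{diss}=\int_{t_0}^{t_1}|m(t)|^2(1-\cos^2(\omega M t))\,dt$ is simultaneously small and produces a flatter, more uniform envelope, matching the uniform-cycle regime of Conjecture~\ref{conjecture}.\ref{case1}.

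Next I would quantify the two competing effects. On one hand, $E_{diss}$ grows with $\omega M(t_1-t_0)$ through the running average of $1-\cos^2$, so dissipation stays small only when $\omega M$ is bounded; this is exactly the role of the cap $M\le 31$, which restricts attention to $M\in\{3,7,31\}$ (the Mersenne primes $2^p-1$ for $p=2,3,5$) and keeps the frequency multiplier in a controlled band. On the other hand, for the modulated cycles to become more uniform, the residues $\omega M t_k \bmod 2\pi$ at the peak ticks must be spread evenly; this is where the primality of $M$ enters, via the observation (following Hill) that $M=2^p-1$ shares no small factor with the rational tick spacings produced by the video-frame sampling, so a three-distance / Weyl equidistribution argument flattens the attenuation sequence $\cos^2(\omega M t_k)$ and forces strict inequality at the peaks. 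Balancing the two monotonicities — dissipation decreasing as $M$ shrinks, uniformity improving as $M$ becomes more ``incommensurate'' — is what would single out the admissible Mersenne primes.

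Finally I would assemble the pieces: Lemma~\ref{lemma:energy} makes every integral involved finite, and Theorem~\ref{theorem:Em} places $E_{diss}$ and $H_{env}$ in the complex plane $\mathbb{C}$, closing the chain. The main obstacle is the word \emph{always}: the clean argument above only yields the conclusion under a quantitative hypothesis linking $\omega$, $M$ and the waveform's effective bandwidth and support length (so that $\omega M(t_1-t_0)$ lands in the ``smoothing-but-low-loss'' window), whereas the conjecture asserts it for every waveform and every tick schedule. Pinning down the sharp hypothesis on $m(t)$ under which the Mersenne-prime choice is provably optimal — rather than merely typically better than a generic non-prime frequency — is the step I expect to demand the most care, and is presumably why the statement is recorded here as a conjecture rather than as a theorem.
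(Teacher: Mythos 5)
\noindent\emph{Review.} The statement you are addressing is recorded in the paper as a conjecture, and the paper itself offers only a ``Partial Picture Proof'': Case~(i) treats a uniformly fluctuating waveform, modelled by $m(t)=\mathrm{sinc}(t)$, and argues that the oscillation of $e^{j\omega t}$ partitions each cycle into regions whose total area is smaller than $\int_{-k}^{k}\mathrm{sinc}(t)\,dt$, so that $\omega=M=1$ keeps the modulated energy closest to the non-modulated energy; Case~(ii), the non-uniform waveform, is explicitly left as an unproved claim supported only by the walker and biker tables. So you are right that no complete proof is available, but several load-bearing steps of your plan are wrong rather than merely incomplete. First, the asserted monotonicity --- ``dissipation decreasing as $M$ shrinks'' --- is contradicted by the paper's own data: Table~2 reports dissipation $21.44\%,\ 25.38\%,\ 22.65\%,\ 25.83\%$ for $M=1,3,7,31$, and Table~3 reports $34.43\%,\ 30.16\%,\ 31.90\%,\ 33.61\%$ with the minimum at $M=3$; the loss is not monotone in $M$. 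Relatedly, the running average of $1-\cos^{2}(\omega Mt)$ does not grow with $\omega M(t_{1}-t_{0})$; it saturates at $1/2$, so your ``controlled band'' mechanism cannot be what singles out the cap $M\leq 31$. Second, your reduction restricts attention to $M\in\{3,7,31\}$, which excludes $M=1$ --- precisely the value that the paper's Case~(i), the only case for which any argument is given, identifies as optimal for uniform waveforms.

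There is also an unresolved ambiguity at your very first step. With the paper's modulated energy $\int_{t_0}^{t_1}\abs{m(t)e^{\pm j2\pi\omega t}}^{2}dt$ one has $\abs{m(t)e^{j\theta}}^{2}=\abs{m(t)}^{2}$, so the dissipation would be identically zero; your real-part reading $m(t)\cos(\omega Mt)$ is one way to escape this, but it is not the paper's: Case~(i) instead estimates the oscillatory integral $\int_{-k}^{k}\mathrm{sinc}(t)e^{j\omega t}\,dt$ and appeals to cancellation between positive and negative regions, a Riemann--Lebesgue--type picture rather than an envelope inequality. Finally, the Weyl/three-distance equidistribution mechanism you invoke to exploit the primality of $M$ appears nowhere in the paper, and it could not even be formulated without first specifying the arithmetic of the tick spacings; as written it is a conjecture layered on a conjecture. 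What does survive of your plan is the closing observation that the word \emph{always} is the real obstacle and that only a quantitative hypothesis on $m(t)$ could promote the statement to a theorem --- which is consistent with the paper leaving Case~(ii) explicitly open.
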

There are two cases to consider:
[{\bf Partial Picture Proof}]$\mbox{}$\\
\begin{compactenum}[{Case}(i)]
\item\label{case1}  
Assume $m(t)$ waveform uniformly fluctuates and frequency $\omega = M = 1$ results in the lowest energy loss.\\
\begin{proof}
Partial picture proof:
Recall that $e^{j\omega t} = cos\omega t+jsin\omega t$,
where $m(t)e^{j\omega t}$ forces the oscillation in a motion waveform to increase.
Let \boxed{m(t)=sinc(t)}, introduced in 1822 by Fourier~\cite{Fourier1822}.  Then $m(t)$ oscillates uniformly on either side of the origin (see sample plot of $sinc t$ in Fig.~\ref{fig:3forms}). The area of \boxed{m(t)=\int_{-k}^{k}sinc(t)e^{j\omega t}dt,\omega \geq 1} is always less than the area \boxed{\int_{-k}^{k}sinc(t)}, $e^{j\omega t}$ partitions $m(t)$ each cycle into regions with smaller areas whose total areas is less than the total area $\int_{-k}^{k}sinc(t)dt$. With $\omega = M = 1$, the modulated waveform energy is closest to non-modulated waveform energy, which is the desired result.
\end{proof}
\item Let $m(t)$ be a non-uniform waveform. We make the unproved claim that the choice of $\omega = M$, varies, i.e., $M$ is not always 1.
\end{compactenum}

\begin{example}{\bf Sample Energy Dissipation: Non-uniform waveform Case~\cite{Tharaka2024thesis}}.\\
Let $m(t) = sinc t$, with cycles that vary uniformly relative to the origin.  This is the case in Fig.~\ref{fig:3forms}.  The result for 3 choices of $M\in\left\{1,3,7\right\}$ are shown in the plots in Fig.~\ref{fig:3forms}.
This leads to the following energy dissipation levels:
\begin{align*}
E_{m(t)} &= 0.9028\ \mbox{non-modulated waveform energy}\\
E_{m(t)e^{jt}} &= 0.1503\ \mbox{$M=1$,modulated waveform energy loss}\\
E_{m(t)e^{j3t}} &= 0.3371\ \mbox{$M=3$,modulated waveform energy loss}\\
E_{m(t)e^{j7t}} &= 0.8954\ \mbox{$M=7$,modulated waveform energy loss}\\
E_{m(t)e^{j31t}} &= 0.9021\ \mbox{$M=31$,modulated waveform energy loss}
\end{align*}
The $\omega=M=31$ case is not shown in Fig.~\ref{fig:3forms}.
\qquad \textcolor{blue}{$\blacksquare$}
\end{example}

Evidence of the correctness of our Conjecture for the non-uniform waveform case in the choice of the Mersenne prime to achieve minimal energy dissipation can be seen in the following two examples.

\begin{figure}[ht!] 
    \centering
    \includegraphics[width=115mm]{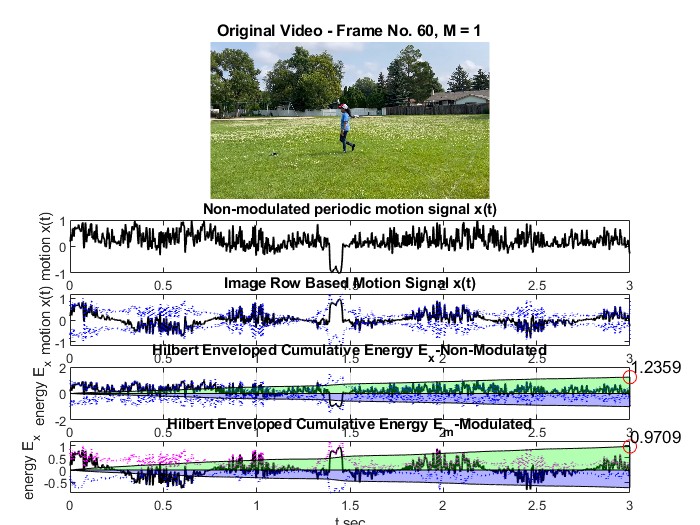} 
    \caption{Energy Analysis for M = 1 of Walker: Comparison of Non-Modulated and Modulated Motion Signals of Frame = 60}
		\label{fig:walker}
\end{figure}

\begin{example}\label{ex:walker}{\bf Sample Energy Dissipation for a walker waveform}.\\
A sample collection of non-modulated and modulated waveforms for a walker for M = 1 is shown in Fig.~\ref{fig:walker}.  In Table~2, M = 1
for the exponential frequency of a modulated waveform
results in the lowest energy dissipation.
\qquad \textcolor{blue}{$\blacksquare$}
\end{example}

However, if consider the choice of $M$ for the modulation frequency for a biker, this choice differs from the choice of $M=1$ in Example~\ref{ex:walker}.

\begin{table}[htbp]\label{table:walker}
\centering
\caption{Energy Dissipation for Walking}
\begin{tabular}{|c|c|c|c|}
\hline
\textbf{M} & \textbf{Non-Modulated Energy} & \textbf{Modulated Energy} & \textbf{Energy Dissipation} \\
           & \textbf{(Ex)}                 & \textbf{(Em)}             & \textbf{Percentage}         \\
\hline
1  & 1.2359 & 0.9709 & 21.44\% \\
\hline
3  & 1.2359 & 0.9222 & 25.38\% \\
\hline
7  & 1.2359 & 0.9559 & 22.65\% \\
\hline
31 & 1.2359 & 0.9166 & 25.83\% \\
\hline
\end{tabular}
\label{tab:walk_energy_dissipation}
\end{table}

\begin{figure}[ht!] 
    \centering
    \includegraphics[width=115mm]{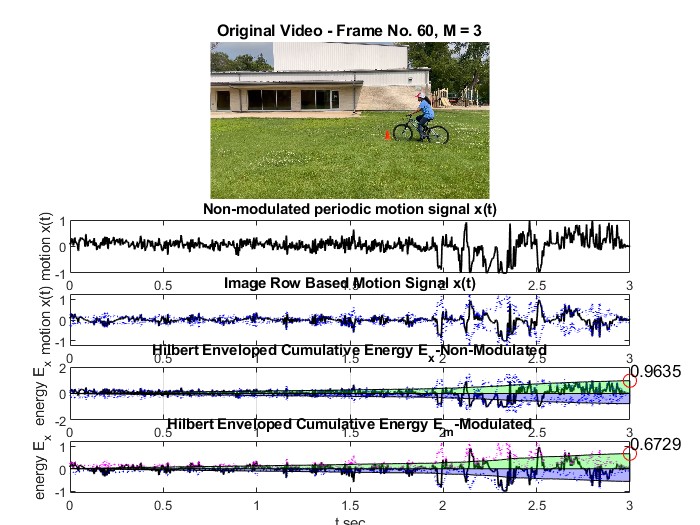} 
    \caption{Energy Analysis for M = 3 of Biker: Comparison of Non-Modulated and Modulated Motion Signals of Frame = 60}
		\label{fig:biker}
\end{figure}

\begin{table}[htbp]\label{table:energy2}
\centering
\caption{Energy Dissipation for Biking}
\begin{tabular}{|c|c|c|c|}
\hline
\textbf{M} & \textbf{Non-Modulated Energy} & \textbf{Modulated Energy} & \textbf{Energy Dissipation} \\
           & \textbf{(Ex)}                 & \textbf{(Em)}             & \textbf{Percentage}         \\
\hline
1  & 0.9635 & 0.6317 & 34.43\% \\
\hline
3  & 0.9635 & 0.6729 & 30.16\% \\
\hline
7  & 0.9635 & 0.6561 & 31.90\% \\
\hline
31 & 0.9635 & 0.6396 & 33.61\% \\
\hline
\end{tabular}
\label{tab:bike_energy_dissipation}
\end{table}

\begin{example}\label{ex:biker}{\bf Sample Energy Dissipation for a biker waveform}.\\
A sample collection of non-modulated and modulated waveforms for a biker for M = 3 is shown in Fig.~\ref{fig:biker}.  In Table~3, M = 3
for the exponential frequency of a modulated waveform
results in the lowest energy dissipation.
\qquad \textcolor{blue}{$\blacksquare$}
\end{example}


\section{Conclusion}
This paper focuses on the frequency characteristic in  modulating dynamical system waveforms.  The appropriate choice of Mersenne prime $M$ as the frequency $\omega$ for  the Euler expontial $e^{j\omega t}\to e^{jMt}$ is considered in modulating a dynamical system waveform to obtain a smoother waveform and achieve minimal energy dissipation, applying Mersenne primes.  It has been found that $M=1$ is the best choice for waveforms whose cycles vary uniformly about the origin.  Choice of $M\in\left\{1.3,7,31\right\}$ for the non-uniform waveforms varies, depending on how extreme the lack of self-similarity is present in waveforms that vary in a chaotic fashion on either side of the origin.  The appropropriate choice of $M$ in modulating a non-uniform waveform is an open problem.
\vspace*{0.2cm}

\section*{Acknowledgements}
This research has been supported by the Natural Sciences \&
Engineering Research Council of Canada (NSERC) discovery grant 185986 
and Instituto Nazionale di Alta Matematica (INdAM) Francesco Severi, Gruppo Nazionale per le Strutture Algebriche, Geometriche e Loro Applicazioni grant 9 920160 000362, n.prot U 2016/000036 and Scientific and Technological Research Council of Turkey (T\"{U}B\.{I}TAK) Scientific Human
Resources Development (BIDEB) under grant no: 2221-1059B211301223.
\vspace{0.2cm}


\bibliographystyle{amsplain}
\bibliography{NSrefs}

\end{document}